
\documentclass[conference]{IEEEtran}
\usepackage{geometry}
\geometry{a4paper,left=0.66in,right=0.66in,top=0.75in,bottom=1.1in}
\usepackage[cmex10]{amsmath}
\usepackage{amssymb}
\usepackage{amsthm}
\usepackage{amsmath}
\usepackage{cite}
\usepackage{booktabs}
\usepackage{subfigure}
\usepackage{multicol}
\usepackage{graphicx}
\usepackage{epsfig}
\usepackage{epstopdf}
\usepackage{cases}
\usepackage{algorithmic}
\usepackage[ruled,linesnumbered]{algorithm2e}
\usepackage{bm}
\usepackage{multirow}
\usepackage{float}

\theoremstyle{definition} 
\theoremstyle{definition} 
\theoremstyle{plain} 
\theoremstyle{plain} \newtheorem{proposition}{Proposition}

\usepackage{mathrsfs}
\usepackage{threeparttable}

\usepackage{ifpdf}
\ifCLASSINFOpdf
\else
\fi
\hyphenation{op-tical net-works semi-conduc-tor}
\begin{document}
%
\title{Joint Power Allocation and Caching Optimization in Fiber-Wireless Access Networks}

 \author{\IEEEauthorblockN{Zhuojia Gu, Hancheng Lu, Daren Zhu, Yujiao Lu }
 \IEEEauthorblockA{~Department of Electrical Engineering and Information Science \\
 University of Science and Technology of China, Hefei, Anhui 230027 China \\
  Email: guzj@mail.ustc.edu.cn, hclu@ustc.edu.cn, darenzhu@mail.ustc.edu.cn, lyj66@mail.ustc.edu.cn}}

\maketitle

%


\begin{abstract}
Fiber-Wireless (FiWi) access networks have been widely deployed due to the complementary advantages of high-capacity fiber backhaul and ubiquitous wireless front end. To meet the increasing demands for bandwidth-hungry applications, access points (APs) are densely deployed and new wireless network standards have been published for higher data rates. Hence, fiber backhaul in FiWi access networks is still facing the incoming bandwidth capacity crunch. In this paper, we involve caches in FiWi access networks to cope with fiber backhaul bottleneck and enhance the network throughput. On the other hand, power consumption is an important issue in wireless access networks. As both power budget in wireless access networks and bandwidth of fiber backhaul are constrained, it is challenging to properly leverage power for caching and that for wireless transmission to achieve optimal system performance. To address this challenge, we formulate the downlink wireless access throughput maximization problem by joint consideration of power allocation and caching strategy in FiWi access networks. To solve the problem, firstly, we propose a volume adjustable backhaul-constrained water-filling method (VABWF) to derive the expression of optimal wireless transmission power allocation. Then, we reformulate the problem as a multiple-choice knapsack problem (MCKP) and propose a dynamic programming algorithm to find the optimal solution of the MCKP problem. Simulation results show that the proposed algorithm significantly outperforms existing algorithms in terms of system throughput under different FiWi access network scenarios.
\end{abstract}

\hspace{8mm}
\par
\begin{IEEEkeywords}
Fiber-Wireless (FiWi) access networks, Fiber Backhaul, Throughput, Caching, Power Allocation
\end{IEEEkeywords}

%
\IEEEpeerreviewmaketitle

\section{Introduction}
In recent years, the growing bandwidth demand is foreseen to put tremendous pressure on access networks.
Cisco predicts that global Internet traffic will increase nearly threefold over the next 5 years, and video applications will account for 82\% of all Internet traffic by 2021, up from 73\% in 2016\cite{ref1}. Obviously, it is crucial for future access networks to support the growing demand for Internet traffic. To cope with the severe bandwidth crunch in access networks, fiber-wireless (FiWi) access networks were proposed to provide an efficient ``last mile'' Internet access in 1990s and have been widely deployed, especially in the field of multimedia communications \cite{ref2}. As an integration of optical access networks and wireless access networks, FiWi access networks leverage the complementary advantages of these two technologies and have attracted a great deal of research interest over the past two decades. Specifically, a FiWi access network combines optical fiber networks as its backhaul for high capacity and wireless networks as its front end for high flexibility and ubiquity \cite{high capacity and flexibility}.

In FiWi access networks, fiber backhaul bandwidth is shared by all wireless front ends. To meet the demands for bandwidth-hungry multimedia applications such as video on demand (VOD) services, access points (APs) are supposed to be densely deployed to improve network capacity in wireless front ends \cite{AP dense deployment}. Moreover, the recent IEEE 802.11ac standard specifies a data rate of at least 500Mbps for higher wireless access throughput. Hence, fiber backhaul in FiWi access networks is facing bandwidth bottleneck that limits the network throughput. Some researches have focused on next-generation passive optical networks (NG-PONs), which aim to achieve higher bandwidth in optical networks. But the cost of deploying NG-PONs is very high due to the fact that existing optical network units (ONUs) and optical line terminals (OLTs) are not directly compatible with the technical requirements of NG-PONs \cite{ref7}, so ONUs and OLTs must be replaced or upgraded to support NG-PONs, which is a costly method at present.

At the same time, there have been growing recent researches towards data caching as an approach for alleviating bandwidth pressure in fiber networks. In \cite{ref4}, an architecture consisting of an ONU with an associated storage unit was proposed to save traffic in the feeder fiber and improve the system throughput as well as mean packet delays. In \cite{ref12}, a framework of software-defined PONs with caches was introduced to achieve a substantial increase in served video requests. In \cite{ref13}, a dynamic bandwidth algorithm based on local storage VOD delivery in PONs was proposed and achievable throughput levels have been improved when a local storage is used to assist VOD delivery. However, most of them mainly focus on providing storage capacity in PONs to improve the network throughput, so these methods can not be directly applied to FiWi access networks.

In FiWi access networks, the functions of ONU and AP are integrated into a component called ONU-AP. To mitigate the bandwidth crunch of fiber backhaul in FiWi access networks, we equip ONU-APs with caches. Note that in FiWi access networks, the wireless front end will also make an impact on the system throughput in addition to the fiber backhaul network. Transmission power and channel conditions in the wireless front end should be taken into account, which is different from optical access networks with caches mentioned above. Particularly, transmission power of ONU-APs and channel conditions from user equipments (UEs) to ONU-APs are important factors determining the throughput in wireless access networks. When involving caches in FiWi access networks, it is challenging to properly leverage power for caching and that for wireless transmission to achieve optimal system performance. That is because if more power is used for wireless transmission, UEs can achieve higher wireless access throughput, whereas less power is used for caching, putting more pressure on fiber backhaul load. As both power budget in wireless access networks and bandwidth of fiber backhaul are constrained, joint power allocation and caching optimization are required to maximize the throughput in FiWi access networks.

In this paper, considering fiber backhaul bottleneck and channel conditions in wireless front end, as well as the sum power constraint of integrated ONU-APs, we formulate the joint power allocation and caching optimization problem as a Mixed Integer Programming (MIP) problem. To solve the problem, firstly, we propose a volume adjustable backhaul constrained water-filling method (VABWF) using convex optimization to derive the expression of optimal wireless transmission power allocation. Then, based on the derived expression, we reformulate the problem as a multiple-choice knapsack problem (MCKP) by exploiting the highest-popularity-first property of files. Finally, we propose a dynamic programming algorithm to solve the MCKP problem.

The rest of this paper is organized as follows. We first introduce the system model in Section \uppercase\expandafter{\romannumeral2}. Then, in Section \uppercase\expandafter{\romannumeral3}, the joint power allocation and caching optimization problem is formulated, and the transformation of the optimization problem into MCKP based on the proposed VABWF method is described. Simulation results are shown in Section \uppercase\expandafter{\romannumeral4} and concluding remarks are provided in Section \uppercase\expandafter{\romannumeral5}.

\section{System Model}
\begin{figure}
\includegraphics[width=3.4in]{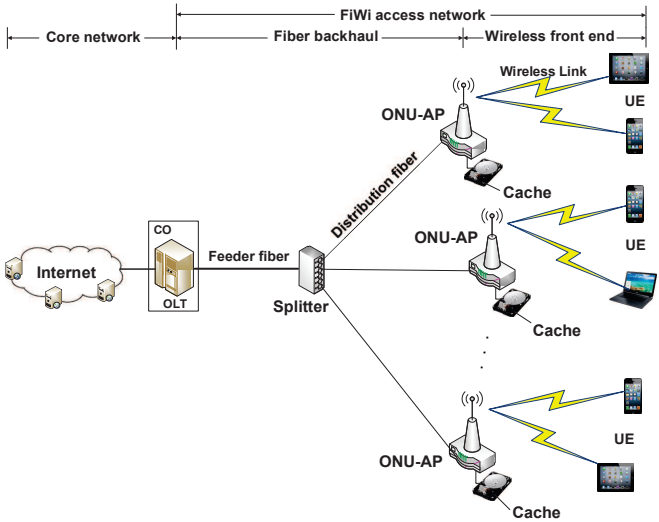}
\caption{FiWi access network architecture with caches at ONU-APs}
\label{fig:FiWi network architecture with caches at ONU-APs}
\end{figure}

\subsection{Network Model}
The network model we propose is shown in Fig.\ref{fig:FiWi network architecture with caches at ONU-APs}. The FiWi access network is divided into a fiber backhaul network and a wireless front end network. We adopt Passive Optical Network (PON) as the fiber backhaul network. The optical line terminal (OLT) is located at the central office (CO), connecting to an optical splitter through the feeder fiber. The optical splitter is connected to multiple ONUs through the distribution fiber. The split ratio of the optical splitter is usually 1:32 or 1:64. In the FiWi access network, each ONU is collocated with an AP, and the integration of the ONU and AP is called integrated ONU-AP. The ONU-APs are denoted by an index set $\mathcal{N}=\{1, 2, \cdots, n, \cdots,N\}$.

We assume that the feeder fiber connecting the splitter and the OLT has a capacity constraint of $C$.
At the wireless front end of the FiWi access network, $K$ user equipments (UEs) are randomly distributed in the coverage of ONU-APs. Let $\mathcal{K}$ denotes the UE index set and $\mathcal{K}=\{1, 2,\cdots, k,\cdots,K\}$. Each UE is assumed to be fixed associated with an ONU-AP, thus we denote the set of UEs associated with ONU-AP$_n$ by $\Phi_n,n \in \mathcal{N}$. We assume that ONU-APs adopt orthogonal frequency-division multiple access (OFDMA) to transmit data, that is to say, an ONU-AP allocates orthogonal subcarriers to UEs, and different transmission power levels are assigned to different UEs.

The received signal-to-interference-plus-noise ratio (SINR) at UE$_k$ associated with ONU-AP$_n$ can be expressed as

\begin{equation}\label{SINR}
  \gamma_{nk} = \frac{g_{nk}P_{nk}}{\sum\nolimits_{m\in\mathcal{N}\backslash\{n\}} g_{mk}P_{mk}+\sigma^2},
\end{equation}
where $N\backslash\{n\}$ denotes all ONU-APs except ONU-AP$_n$, $P_{nk}$ denotes the transmission power allocated to UE$_k$ by ONU-AP$_n$, $g_{nk}$ denotes the Rayleigh channel gain from ONU-AP$_n$ to UE$_k$ which follows the exponential distribution. $\sigma^2$ characterizes the background noise power level.

We assume that non-overlapping channels are assigned to adjacent ONU-APs, and ONU-APs that reuse the same spectrum are deployed far away from each other. Thus, the co-channel interference is negligible, namely, $g_{mk}=0, m\in \mathcal{N}\backslash \{n\}$. The wireless transmission rate from ONU-AP$_n$ to UE$_k$ is given by the Shannon's theorem
\begin{equation}\label{shannon}
  R_{nk}=B\log_{2}(1+\frac{g_{nk}P_{nk}}{\sigma^2}),
\end{equation}
where $B$ is the subchannel bandwidth for each UE.

\subsection{ONU-AP Caching Model}
Each ONU-AP is cache-enabled, and the cache size of ONU-AP$_n$ is denoted by $Q_n$. Requested files are denoted by an index set $\mathcal{J}=\{1,2,\cdots,j,\cdots,J\}$. If ONU-AP$_n$ already caches a requested file, it will respond to the request directly. Otherwise, the file should be fetched from the Internet via capacity-limited fiber backhaul. We assume all files have the same size, which is denoted by $s$. The file popularity distribution of $J$ files is modeled as Zipf distribution, so the probability of the $j$-th ranked file being requested by UEs is $p_{j}=\frac{j^{-\delta}}{\sum\nolimits_{j=1}^{J}j^{-\delta}}$,
where $\delta$ is a shape parameter that shapes the skewness of the popularity distribution. Let $p_n^{hit}$ denote the probability that files requested by UEs are cached at ONU-AP$_n$(i.e., cache hit ratio), and  $p_n^{miss}$ denote the probability that files requested by UEs are not cached at ONU-AP$_n$(i.e., cache miss ratio), where $p_n^{hit}+p_n^{miss}=1$. Given the cache decision on the $j$-th file at ONU-AP$_n$ as $x_{nj},x \in \{0,1\}$, the cache hit ratio at ONU-AP$_n$ can be written as
\begin{equation}\label{pnhit}
  p_{n}^{hit}=\sum\limits_{j \in \mathcal{J}}x_{nj}p_{j} =\frac{\sum\nolimits_{j \in \mathcal{J}}x_{nj}j^{-\delta}}{\sum\nolimits_{j \in \mathcal{J}}j^{-\delta}}.
\end{equation}

Note that files requested by UEs that are not cached at ONU-APs should be fetched from the Internet via fiber backhaul. According to the cache miss ratio  $p_n^{miss}$ at ONU-AP$_n$, the average fiber backhaul bandwidth occupied by UE$_k$ at ONU-AP$_n$ can be written as $p_n^{miss}R_{nk}$, thus the backhaul constraint of all $K$ UEs should be satisfied as
\begin{equation}\label{Rca-miss}
  \sum\limits_{n \in \mathcal{N}}\sum\limits_{k \in \Phi_n}p_n^{miss}R_{nk}\leq C.
\end{equation}

\subsection{Power Consumption Model}
The total power consumed by an ONU-AP$_n$ can be expressed by extending the typical wireless network power consumption model\cite{how much energy} to include caching power consumption as follows
\begin{equation}\label{Pntotal}
  P_n^{total}=\sum\limits_{k \in \Phi{n}}\rho P_{nk}+P_n^{ca}+P_n^{cc},
\end{equation}
where $P_n^{ca}$ denotes the power consumed at ONU-AP$_n$ for caching. $P_n^{cc}$ denotes the power consumed by circuits, which is a constant that depends on the circuit design. $\rho$ is a coefficient that measures the impact of power amplifier, power supply and cooling.
We use an energy-proportional model for caching power consumption \cite{power proportional model1} \cite{power proportional model2}, which has been widely adopted in content-centric networking for efficient use of caching power. In this model, the consumption of caching power is proportional to the total number of bits cached at an ONU-AP, which can be expressed as $P_n^{ca}=\omega \Omega_n$. $\Omega_n$ denotes the total number of bits cached at ONU-AP$_n$, and $\omega$ is a power coefficient related to caching hardware that reflects the caching power efficiency in watt/bit. In this paper, we consider the common caching device, high-speed solid state disk (SSD), for caching files at ONU-APs. The value of $\omega$ for SSD is $6.25 \times 10^{-12}$ watt/bit.

For each ONU-AP, it can adjust the transmission power and the caching power to achieve higher throughput without violating the maximum power constraint $P_M$, namely, $P_n^{total}\leq P_M$. For notational convenience, the circuit power $P_n^{cc}$ is omitted for it is a constant.


\section{Problem Formulation and Solution}
\subsection{Problem Formulation}
Our objective is maximizing the downlink wireless access throughput for UEs by optimizing power allocation and caching strategy at ONU-APs in the FiWi access network. This is formulated as follows,
\begin{subequations}
\begin{align}
\textbf{P1: } \max_{\bm{x},\bm{P}} \label{objective function}   &\quad\sum_{n\in \mathcal{N}}\sum_{k \in \Phi_{n}}B\log_2\left(1+\frac{g_{nk}P_{nk}}{\sigma^2}\right) \\
\text{s.t.}    &\quad \sum_{k \in \Phi_{n}}\rho P_{nk}+\omega\sum_{j \in \mathcal{J}}x_{nj}s\leq P_M\label{constraint:power_constraint},\forall n\in \mathcal{N}\\
        &\label{constraint:backhaul constraint}\quad \sum_{n \in \mathcal{N}}\sum_{k \in \Phi_{n}}\sum_{j \in \mathcal{J}}p_j(1-x_{nj})R_{nk} \leq C \\
        &\quad s\sum_{j \in \mathcal{J}}x_{nj} \leq Q_n, \quad \forall n\in \mathcal{N}\label{constraint:caching capacity constraint}\\
        &\quad P_{nk} \geq 0 , \quad \forall n\in \mathcal{N},\forall k \in \mathcal{K}\label{constraint:non-negative constraint}\\
        &\quad x_{nj}\in\{0,1\}, \quad  \forall n\in \mathcal{N},\forall j \in \mathcal{J} \label{constraint:binary constraint}
\end{align}
\end{subequations}

Constraint (\ref{constraint:power_constraint}) makes sure that the sum power consumed by each ONU-AP does not exceed the maximum power constraint $P_M$. Constraint (\ref{constraint:backhaul constraint}) ensures that the backhaul occupancy by uncached files should not be greater than the capacity constraint of fiber backhaul $C$. Constraint (\ref{constraint:caching capacity constraint}) makes sure that the total size of files cached at ONU-AP$_n$ should not exceed the cache capacity $Q_n$. Constraint (\ref{constraint:non-negative constraint}) guarantees that the transmission power is non-negative. Constraint (\ref{constraint:binary constraint}) means that the cache decision on the $j$-th file at ONU-AP$_n$ is a binary variable $x_{nj}$.

\subsection{Optimal Transmission Power Allocation}
Problem \textbf{P1} is a typical Mixed Integer Programming (MIP) problem, which is non-linear and non-convex. However, the problem becomes a convex optimization problem for fixed $\{x_{nj}^\circ\}$. By deriving the Karush-Kuhn-Tucker (KKT) conditions, we can get the optimal solution to transmission power allocation.

The Lagrangian of \textbf{P1} is given as
\begin{small}
\begin{align}\label{Lagrangian}
   &\quad \mathcal{L}(\bm{P},\lambda,\bm{\mu},\bm{\varepsilon})= \nonumber \\
   & -\sum_{n \in \mathcal{N}}\sum_{k \in \Phi_{n}}R_{nk}+ \varepsilon_{nk}P_{nk}
    +\lambda\Big(\sum_{n \in \mathcal{N}}\sum_{k \in \Phi_{n}}p_n^{miss}R_{nk}-C\Big)  \\
   & +\sum_{n \in \mathcal{N}}\mu_n\Big(\sum_{k \in \Phi_n}\rho P_{nk}+\omega\sum_{j \in \mathcal{J}}x_{nj} s-P_M\Big)
    -\sum_{n \in \mathcal{N}}\sum_{k \in \Phi_{n}}\varepsilon_{nk}P_{nk} \nonumber
\end{align}
\end{small}
where $\lambda,\bm{\mu} \in \mathbf{R}^N,\bm{\varepsilon} \in \mathbf{R}^{N\times K}$are Lagrangian multipliers.

\begin{proposition}\label{prop1}
Problem \textbf{P1} becomes a convex optimization problem for a given solution to the caching strategy $\{x_{nj}\}$.
\end{proposition}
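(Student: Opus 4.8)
The plan is to fix the caching vector $\bm{x}$ and show that the remaining optimization over the power vector $\bm{P}$ meets the textbook criteria for a convex program: a concave objective maximized over a feasible region cut out by convex inequality constraints and affine equality constraints. With $\bm{x}$ held fixed, three simplifications occur at once: in (\ref{constraint:power_constraint}) the caching‑energy term $\omega\sum_{j\in\mathcal{J}}x_{nj}s$ is a constant; in (\ref{constraint:backhaul constraint}) the per‑AP miss ratio $\sum_{j\in\mathcal{J}}p_j(1-x_{nj})=p_n^{miss}$ is a nonnegative constant; and (\ref{constraint:caching capacity constraint}) no longer involves $\bm{P}$, so it degenerates to a feasibility condition on $\bm{x}$ alone and plays no role in the convexity argument.

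First I would dispatch the objective. Each term $B\log_2(1+g_{nk}P_{nk}/\sigma^2)$ depends on the single variable $P_{nk}$ and has second derivative $-\tfrac{B}{\ln 2}(g_{nk}/\sigma^2)^2/(1+g_{nk}P_{nk}/\sigma^2)^2<0$ on $P_{nk}\ge 0$, hence is strictly concave; a sum of concave functions is concave, so (\ref{objective function}) is concave (equivalently, $-\sum_{n,k}R_{nk}$ in (\ref{Lagrangian}) is convex, its Hessian in $\bm{P}$ being diagonal with strictly positive entries). Next, once the caching term is a constant, each power constraint in (\ref{constraint:power_constraint}) reads $\sum_{k\in\Phi_n}\rho P_{nk}\le P_M-\omega\sum_{j}x_{nj}s$, which is affine in $\bm{P}$ and therefore defines a convex half‑space; the nonnegativity constraints (\ref{constraint:non-negative constraint}) are affine as well.

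The step I expect to be the crux is the backhaul constraint (\ref{constraint:backhaul constraint}), which after the reductions is $\sum_{n\in\mathcal{N}}\sum_{k\in\Phi_n}p_n^{miss}R_{nk}\le C$ with each $R_{nk}$ concave in $P_{nk}$; a constraint of the form ``sum of concave functions $\le$ constant'' is not in general convex in $\bm{P}$, so one cannot merely invoke concavity of $R_{nk}$. I would handle this by reparametrizing through the strictly increasing bijection $R_{nk}=B\log_2(1+g_{nk}P_{nk}/\sigma^2)\iff P_{nk}=\tfrac{\sigma^2}{g_{nk}}(2^{R_{nk}/B}-1)$, which maps $\{P_{nk}\ge 0\}$ onto $\{R_{nk}\ge 0\}$. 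In the rate coordinates the objective $\sum_{n,k}R_{nk}$ is linear, the backhaul and nonnegativity constraints are linear, and the power constraint becomes $\sum_{k\in\Phi_n}\rho\tfrac{\sigma^2}{g_{nk}}(2^{R_{nk}/B}-1)\le P_M-\omega\sum_j x_{nj}s$, whose left‑hand side is a sum of convex exponentials and hence convex. Thus the problem is equivalent to maximizing a linear function over the intersection of a convex sublevel set with half‑spaces — a convex program — and a constraint qualification (e.g.\ Slater's, via $\bm{P}=\bm{0}$) holds under the mild nondegeneracy $\omega\sum_j x_{nj}s<P_M$, so the KKT system obtained from (\ref{Lagrangian}) is both necessary and sufficient for global optimality, which is exactly what the VABWF derivation that follows requires.
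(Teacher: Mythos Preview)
Your argument is correct and, in fact, more careful than the paper's own proof. The paper dispatches Proposition~\ref{prop1} in two sentences: it observes that the objective is strictly concave and increasing in $P_{nk}$, asserts that constraints (\ref{constraint:power_constraint}) and (\ref{constraint:backhaul constraint}) ``are convex'' once $\{x_{nj}\}$ is fixed, and concludes. It does not engage with the point you flagged as the crux. But (\ref{constraint:backhaul constraint}) reads $\sum_{n,k}p_n^{miss}R_{nk}\le C$ with a concave left-hand side, and the sublevel set of a concave function is not convex in general (e.g.\ in two power variables the boundary $(1+P_1)(1+P_2)=\text{const.}$ is a hyperbola whose ``inside'' is nonconvex). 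So in the $\bm{P}$ coordinates the feasible region is not convex, and the paper's one-line assertion is, strictly speaking, a gap.

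Your reparametrization in the rate variables $R_{nk}$ repairs this cleanly: the objective and the backhaul constraint become linear, while the power budget becomes a sum of exponentials and hence convex, so the problem is equivalent via a monotone separable bijection to a bona fide convex program. That is exactly what licences the KKT system (\ref{necessary condition})--(\ref{dual feasibility}) and the subsequent VABWF derivation. What the paper's approach buys is brevity; what yours buys is correctness. One small wording point: your argument establishes that \textbf{P1} is \emph{equivalent} to a convex problem rather than literally convex in $\bm{P}$; this distinction is immaterial for the downstream KKT analysis, but you might phrase the conclusion as ``\textbf{P1} can be recast as a convex program for fixed $\{x_{nj}\}$'' to be precise.
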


\begin{proof}
The objective function (\ref{objective function}) is a strictly concave and increasing function with respect to $P_{nk}$, the inequality constraints (\ref{constraint:power_constraint}) and (\ref{constraint:backhaul constraint}) are convex for a given caching solution $\{x_{nj}^\circ\}$. Therefore, problem $\textbf{P1}$ becomes a convex optimization problem for fixed $\{x_{nj}^\circ \}$.
\end{proof}

The KKT conditions can be expressed as

\begin{subequations}
\begin{numcases}{}
\frac{\partial \mathcal{L}}{\partial P_{nk}}=\frac{g_{nk}}{\sigma^{2}+g_{nk}P_{nk}\ln2}(-1+\lambda \sum_{j \in \mathcal{J}}Bp_j(1-x_{nj}^\circ)) \nonumber\\
\qquad\qquad +\rho\mu_n-\varepsilon_{nk}=0 \label{necessary condition} \\
\lambda\bigg(\sum_{n \in \mathcal{N}}\sum_{k \in \Phi_{n}}\sum_{j \in \mathcal{J}}p_j(1-x_{nj}^\circ)R_{nk}-C\bigg)=0 \label{complementary slackness1}\\
\mu_n\bigg(\sum_{k \in \Phi_n}\rho P_{nk}+\omega\sum_{j \in \mathcal{J}}x_{nj}^\circ s-P_M\bigg)=0 \label{complementary slackness2}\\
\varepsilon_{nk}P_{nk}=0,\quad \forall n \in \mathcal{N},\forall k \in \mathcal{K} \label{complementary slackness3} \\
\lambda,\mu_n,\varepsilon_{nk}\geq 0,\quad \forall n \in \mathcal{N},\forall k \in \mathcal{K} \label{dual feasibility}
\end{numcases}
\end{subequations}
where (\ref{necessary condition}) is a necessary condition for an optimal solution, (\ref{complementary slackness1}), (\ref{complementary slackness2}) and (\ref{complementary slackness3}) represent the complementary slackness, and (\ref{dual feasibility}) represents the dual feasibility.

\begin{proposition}\label{prop2}
The downlink throughput in the FiWi access network is maximized when the ONU-APs consume the maximum power, i.e.,satisfy
\begin{equation}
 \sum_{k \in \Phi_{n}}\rho P_{nk}+\omega\sum_{j \in \mathcal{J}}x_{nj}s= P_M
 ,\quad \forall n \in \mathcal{N}\label{equality of power}
\end{equation}
\end{proposition}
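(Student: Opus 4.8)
The plan is to argue by contradiction through a feasible perturbation of the optimum. Suppose $(\bm{x}^\circ,\bm{P}^\circ)$ solves \textbf{P1} but (\ref{equality of power}) fails at some ONU-AP $n_0$, so the spare power $\Delta := P_M - \sum_{k\in\Phi_{n_0}}\rho P_{n_0 k}^\circ - \omega\sum_{j\in\mathcal{J}}x_{n_0 j}^\circ s$ is strictly positive (it cannot be negative by feasibility of (\ref{constraint:power_constraint})). Assuming $\Phi_{n_0}\neq\emptyset$ (an ONU-AP with no associated UE contributes nothing to (\ref{objective function}) and the claim is vacuous there), fix $k_0\in\Phi_{n_0}$. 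Since $R_{n_0 k_0}=B\log_2(1+g_{n_0 k_0}P_{n_0 k_0}/\sigma^2)$ is continuous and strictly increasing in $P_{n_0 k_0}$, it suffices to exhibit a perturbation that raises $P_{n_0 k_0}$ by a positive amount while keeping (\ref{constraint:power_constraint})--(\ref{constraint:binary constraint}) satisfied; this strictly increases the objective and contradicts optimality.

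The argument then splits on the backhaul constraint (\ref{constraint:backhaul constraint}). If it is slack at the optimum, I simply increase $P_{n_0 k_0}$ by a small enough positive amount (at most $\Delta/\rho$) so that both (\ref{constraint:power_constraint}) at $n_0$ and (\ref{constraint:backhaul constraint}) still hold; all remaining constraints are untouched, so this already gives the contradiction. If (\ref{constraint:backhaul constraint}) is tight, a direct power increase is infeasible, so I first spend $\Delta$ on caching: pick $j_0$ with $x_{n_0 j_0}^\circ=0$ (possible unless all files are already cached) and flip it to $1$, which is feasible for (\ref{constraint:power_constraint}) provided $\Delta\geq\omega s$ and for (\ref{constraint:caching capacity constraint}) provided the cache at $n_0$ is not full. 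This lowers $p_{n_0}^{miss}$ by $p_{j_0}$, hence decreases the left side of (\ref{constraint:backhaul constraint}) by $p_{j_0}\sum_{k\in\Phi_{n_0}}R_{n_0 k}^\circ>0$; the newly created backhaul slack, together with any leftover power $\Delta-\omega s\geq 0$, then permits a strict increase of $P_{n_0 k_0}$, and again the objective rises. The KKT system gives the same picture: by (\ref{complementary slackness2}), equality (\ref{equality of power}) at $n$ is equivalent to $\mu_n>0$, so it is enough to rule out $\mu_n=0$; if $\mu_{n_0}=0$, then (\ref{necessary condition}) with its strictly positive prefactor and $\varepsilon_{n_0 k}\geq 0$ forces $\lambda\sum_{j}Bp_j(1-x_{n_0 j}^\circ)\geq 1$, whence (by (\ref{complementary slackness1}) and (\ref{complementary slackness3})) the backhaul is active and $P_{n_0 k}^\circ=0$ for all $k\in\Phi_{n_0}$ — precisely the configuration the perturbation above eliminates.

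I expect the backhaul-tight case to be the main obstacle: there, additional transmission power can only be unlocked by first freeing backhaul through caching, so the argument genuinely relies on the mild non-degeneracy conditions flagged above — $\Phi_n\neq\emptyset$, the cache at $n$ not already full, and a spare margin of at least $\omega s$ (harmless in practice, since $\omega=6.25\times10^{-12}$ W/bit makes $\omega s$ negligible next to $P_M$). Under these conditions every case closes and an optimal solution satisfying (\ref{equality of power}) exists; without them one can only claim the weaker statement that the power budget is used up to within a single file's caching cost.
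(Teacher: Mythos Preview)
Your argument is sound and in fact more careful than the paper's own. The paper's proof is a one-liner: if the power budget at ONU-AP$_n$ is slack, spend the slack entirely on \emph{caching}, which raises the cache hit ratio without touching any $P_{nk}$; the objective (\ref{objective function}) is unchanged and (\ref{constraint:backhaul constraint}) only loosens. Strictly speaking this is not a contradiction but an exchange argument showing that among the optima there is one satisfying (\ref{equality of power}); the paper calls it proof by contradiction anyway. Your route is different: you split on whether (\ref{constraint:backhaul constraint}) is active and, whenever possible, push \emph{transmission} power up to obtain a genuinely strict improvement, so you are proving the stronger statement that (generically) every optimum satisfies (\ref{equality of power}). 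The price is the non-degeneracy you honestly flag---$\Phi_n\neq\emptyset$, cache not full, and $\Delta\geq\omega s$---whereas the paper's cruder move silently needs the same discreteness caveat (caching power is an integer multiple of $\omega s$, so one cannot in general hit $P_M$ exactly) but does not mention it. Your KKT cross-check is a nice addition the paper does not include.
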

\begin{proof}
Suppose that there exists $n \in N$ which satisfies $\sum_{k \in \Phi_{n}}\rho P_{nk}+\omega\sum_{j \in \mathcal{J}}x_{nj}s< P_M$, then ONU-AP$_n$ can increase its caching power to $P_M-\sum_{k \in \Phi_{n}}\rho P_{nk}$ for a higher cache hit ratio without reducing the sum rate of UEs associated to ONU-AP$_n$. Through proof by contradiction, we get the conclusion that the downlink throughput of the FiWi access network is maximized when the ONU-APs consume the maximum level of power.
\end{proof}

By using Proposition 1, 2 and the KKT conditions mentioned above, we can obtain an expression for the optimal transmission power $P_{nk}^*$ with respect to Lagrangian multipliers $\lambda,\mu_n$
\begin{equation}\label{optimal Pnk with lamda}
 P_{nk}^*=\bigg(\frac{B-\lambda B\sum_{j \in \mathcal{J}}p_j(1-x_{nj}^\circ)}{\rho\mu_n \ln2}-\frac{\sigma^2}{g_{nk}}\bigg)^+,
\end{equation}
with $(\cdot)^+=\max(\cdot,0)$.

\begin{proposition}
For problem \textbf{P1}, the optimal transmission power $P_{nk}^*$ allocated to UE$_k$ by ONU-AP$_n$ can be expressed as
\begin{equation}\label{optimal Pnk}
 P_{nk}^*=\bigg(\frac{B}{\rho\mu_n \ln2}-\frac{\sigma^2}{g_{nk}}\bigg)^+,
\end{equation}
and the corresponding Lagrangian multiplier $\mu_n$ can be expressed as

\begin{small}
\begin{equation}\label{mu n without lamda}
 \mu_n=\frac{|\Phi_n|B}{\left(P_M+\sum_{k \in \Phi_n}\frac{\sigma^2}{g_{nk}}-\omega\sum_{j \in \mathcal{J}}x_{nj}^\circ s\right)\ln2}.
\end{equation}
\end{small}

\end{proposition}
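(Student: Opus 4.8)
The plan is to treat the statement as a consequence of the multiplier‑dependent formula (\ref{optimal Pnk with lamda}) that the excerpt has already derived from Proposition \ref{prop1}, Proposition \ref{prop2} and the KKT system: first remove the dependence on $\lambda$ to obtain (\ref{optimal Pnk}), then use the tightness of the power budget from Proposition \ref{prop2} to solve explicitly for $\mu_n$ and obtain (\ref{mu n without lamda}).

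\emph{Step 1 (eliminating $\lambda$).} Using $\sum_{j\in\mathcal{J}}p_j(1-x_{nj}^{\circ})=1-p_n^{hit}=p_n^{miss}$, the positive branch of (\ref{optimal Pnk with lamda}) reads $P_{nk}^{*}=\big(\tfrac{B(1-\lambda p_n^{miss})}{\rho\mu_n\ln 2}-\tfrac{\sigma^2}{g_{nk}}\big)^{+}$. The key observation is that $B(1-\lambda p_n^{miss})/(\rho\mu_n\ln 2)$ is a single per‑AP water level shared by all $k\in\Phi_n$; it is strictly positive, since $\mu_n>0$ (the power constraint is active by Proposition \ref{prop2}, and $\mu_n=0$ would force $\lambda B p_n^{miss}=1$ for every active $k$ through (\ref{necessary condition}), a degenerate case), and the level must exceed $\sigma^2/g_{nk}$ for at least one $k$. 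Hence one may without loss of generality relabel this constant as $B/(\rho\mu_n\ln 2)$, i.e.\ absorb $1-\lambda p_n^{miss}$ into the dual variable $\mu_n$ enforcing (\ref{constraint:power_constraint}); this leaves (\ref{equality of power}) unchanged and yields (\ref{optimal Pnk}). Equivalently, whenever the backhaul constraint (\ref{constraint:backhaul constraint}) is slack at the optimum one simply has $\lambda=0$ and the same conclusion.

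\emph{Step 2 (solving for $\mu_n$).} By Proposition \ref{prop2} the budget (\ref{constraint:power_constraint}) holds with equality, i.e.\ (\ref{equality of power}) is satisfied. Substituting the positive‑branch form (\ref{optimal Pnk}) for each active $k\in\Phi_n$ into (\ref{equality of power}) gives $\sum_{k\in\Phi_n}\rho\big(\tfrac{B}{\rho\mu_n\ln 2}-\tfrac{\sigma^2}{g_{nk}}\big)+\omega\sum_{j\in\mathcal{J}}x_{nj}^{\circ}s=P_M$; the $\rho$ cancels in the first term, leaving a linear equation in $1/\mu_n$. Collecting terms and dividing through produces the closed form (\ref{mu n without lamda}).

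\emph{Main obstacle.} The delicate point is the water‑filling active set: the computation above tacitly assumes every UE in $\Phi_n$ is allocated strictly positive power, so that the sum in (\ref{equality of power}) contributes exactly $|\Phi_n|$ water‑level terms. In general the weakest‑channel UEs get $P_{nk}^{*}=0$ (with $\varepsilon_{nk}>0$ in (\ref{necessary condition})), in which case $|\Phi_n|$ in (\ref{mu n without lamda}) must be replaced by the number of active UEs and the $\sum_k\sigma^2/g_{nk}$ term restricted to them, turning (\ref{mu n without lamda}) into a fixed point resolved by the usual iterative pruning — precisely what the VABWF procedure is designed to do. I would therefore either state the proposition under the all‑active assumption or fold the pruning into its statement. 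A secondary, more cosmetic issue is making the ``absorb $\lambda$ into $\mu_n$'' step airtight; this is cleanest if one first argues, in the style of the contradiction used for Proposition \ref{prop2}, that throttling transmission power is never needed to meet the backhaul budget once the caching levels are fixed, so that $\lambda=0$ genuinely holds in the per‑AP subproblem.
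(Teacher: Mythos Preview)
Your route differs from the paper's. You absorb $1-\lambda p_n^{miss}$ into $\mu_n$, arguing that the per-AP water level is a single constant pinned down by the tight budget (\ref{equality of power}); this is sound and yields (\ref{optimal Pnk})--(\ref{mu n without lamda}) directly, at the price that the resulting $\mu_n$ is an \emph{effective} level parameter rather than necessarily the KKT multiplier on (\ref{constraint:power_constraint}). The paper instead runs precisely the dominance argument you flag in your closing remark as the ``cleanest'' option: it lets $\{P_{nk}^\circ,x_{nj}^\circ\}$ be the $\lambda=0$ water-filling solution and compares it to the true optimum $\{P_{nk}^*,x_{nj}^*\}$, asserting that $P^\circ$ achieves at least the same per-AP sum rate while using no more transmission power, hence (via (\ref{equality of power})) at least as much caching power and a weakly higher hit ratio, so that $\{P^\circ,x^\circ\}$ is backhaul-feasible and therefore optimal. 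Your absorption is arguably more direct; the paper's comparison buys the stronger statement that the backhaul multiplier can genuinely be taken to zero, which is what justifies calling the method ``backhaul-constrained'' water-filling. Your active-set caveat is correct and is not addressed in the paper's proof itself; the iterative pruning you describe is exactly what Algorithm~1 implements in lines 4--7.
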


\begin{proof}
Eq.(\ref{optimal Pnk}) holds if the backhaul constraint (\ref{constraint:backhaul constraint}) is neglected (i.e., $\lambda=0$). Substituting Eq.(\ref{optimal Pnk with lamda}) into Eq.(\ref{equality of power}), we obtain Eq.(\ref{mu n without lamda}). What we need to prove is that the conclusion still holds with the backhaul constraint (\ref{constraint:backhaul constraint}). Suppose the optimal solution to problem \textbf{P1} is $\{P_{nk}^*,x_{nj}^*\}$, then $\sum_{k \in \Phi_n}\rho P_{nk}^* +\omega\sum_{j \in \mathcal{J}}x_{nj}^* s=P_M$. Let $\{P_{nk}^\circ,x_{nj}^\circ\}$ denote the solution that satisfies Eq.(\ref{optimal Pnk}), which indicates the maximum sum rate of UEs associated to ONU-AP$_n$ without the backhaul constraint and is expressed as

\begin{small}
\begin{subequations}
\begin{numcases}{}
\sum_{k \in \Phi_n}\log_2\Big(1+\frac{g_{nk}P_{nk}^\circ}{\sigma^2}\Big)\geq\sum_{k \in \Phi_n}\log_2\Big(1+\frac{g_{nk}P_{nk}^*}{\sigma^2}\Big) \label{contradiction1}\qquad \\
\sum_{k \in \Phi_n}P_{nk}^\circ \leq \sum_{k \in \Phi_n}P_{nk}^*  \label{contradiction2}
\end{numcases}
\end{subequations}
\end{small}

From Eq.(\ref{equality of power}) and Eq.(\ref{contradiction2}) we obtain $p_n^{hit^\circ} \geq p_n^{hit^*}$, which means that $\{P_{nk}^\circ,x_{nj}^\circ \}$ gets a higher cache hit ratio without reducing the sum rate of UEs associated to ONU-AP$_n$. Therefore, Eq.(\ref{contradiction1}) also holds with the backhaul constraint, so $\{P_{nk}^\circ,x_{nj}^\circ \}$ yields the optimal solution which is obtained by using Eq.(\ref{optimal Pnk}).
\end{proof}


In Eq.(\ref{optimal Pnk}), $\frac{\sigma^2}{g_{nk}}$ is the inverse of channel gain normalized by the noise variance $\sigma^2$. Eq.(\ref{optimal Pnk}) complies with the form of water-filling method \cite[Chapter 6]{waterfilling}.  Think of a vessel whose bottom is formed by plotting those values of $\frac{\sigma^2}{g_{nk}}$ for each subchannel $k$. Then, we flood the vessel with water to a depth $\frac{B}{\rho \mu_n \ln2}$. Note that the volume of water is not fixed and it depends on the caching solution $\{x_{nj}^\circ\}$. For a given caching solution, the total amount of water used is then $P_M-(\omega \sum_{j \in \mathcal{J}}x_{nj}^\circ s)$. The depth of the water at each subchannel $k$ is equal to the optimal transmission power allocated to the channel. Proposition 3 guarantees that the optimal solution to transmission power allocation still holds with the backhaul constraint, so this method is called volume adjustable backhaul-constrained water-filling method (VABWF).

\newtheorem*{Theorem 1}{Theorem 1}
\begin{Theorem 1}
On the condition of total power constraint, files with higher popularity should be cached preferentially for maximizing the throughput in the FiWi access network.\label{Theorem1}
\end{Theorem 1}

\begin{proof}
Suppose that $\{P_{nk}^*,x_{nj}^* \}$ is the optimal solution to problem \textbf{P1} with the corresponding downlink throughput $R^{dl^*}$, and $\{x_{nj}^* \}$ violates the popularity based caching policy mentioned in Theorem 1. Let $\{P_{nk}^\circ,x_{nj}^\circ \}$ be another set and $\{x_{nj}^\circ \}$ satisfies

\begin{small}
\begin{equation}\label{prop3 prove1}
  \Big\{x_{nj}^\circ \Big| x_{nj}^\circ \geq x_{n(j+1)}^\circ,\sum_{j \in \mathcal{J} } x_{nj}^\circ = \sum_{j \in \mathcal{J} } x_{nj}^* , \forall n \in \mathcal{N}, \forall j \in \mathcal{J} \Big\}. \nonumber
\end{equation}

\end{small}

Namely, $\{x_{nj}^\circ \}$ is a caching solutions that obey Theorem 1. As $p_j$ is monotonically decreasing in $j$, the cache miss ratio satisfies
\begin{small}

\begin{equation}\label{prop3 prove2}
  \sum_{j \in \mathcal{J}} p_j(1-x_{nj}^\circ) < \sum_{j \in \mathcal{J}} p_j(1-x_{nj}^*).
\end{equation}
\end{small}

Note that $\{P_{nk}^\circ,x_{nj}^\circ \}$ satisfies $\sum_{j \in \mathcal{J}} x_{nj}^\circ = \sum_{j \in \mathcal{J}} x_{nj}^*$. With constraint (\ref{prop2}), we can obtain a set of transmission power solution that satisfy $P_{nk}^\circ = P_{nk}^*$. With (\ref{prop3 prove2}), $\{P_{nk}^\circ,x_{nj}^\circ \}$ is ensured to satisfy the constraint (\ref{constraint:backhaul constraint}), so $\{P_{nk}^\circ,x_{nj}^\circ \}$ is a solution to problem \textbf{P1}, with the corresponding downlink throughput $R^{dl^\circ}$. By the optimality of $\{P_{nk}^*,x_{nj}^* \}$ to problem \textbf{P1}, we have
\begin{equation}\label{prop3 prove3}
  R^{dl^*} - R^{dl^\circ} \geq 0.
\end{equation}
Since $P_{nk}^\circ = P_{nk}^*$, the equality in (\ref{prop3 prove3}) must hold, which concludes that $\{P_{nk}^\circ,x_{nj}^\circ \}$ yields the optimal solution to problem \textbf{P1}.
\end{proof}

\subsection{Problem Reformulation and Solution}
By using Theorem 1 and the VABWF method mentioned previously, the solution space of problem \textbf{P1} is greatly reduced. We denote the solution space of problem \textbf{P1} by $\mathcal{A}$, thus we have that
\begin{align}\label{definition of set A}
   \mathcal{A} =  \bigg\{ \left\{P_{nk},x_{nj}\right\} \bigg | & P_{nk}=\bigg(\frac{B}{\rho\mu_n \ln2}-\frac{\sigma^2}{g_{nk}}\bigg)^+; \nonumber \\
  &  x_{nj} \geq x_{n(j+1)},j \in \mathcal{J} \bigg\} \nonumber
\end{align}
The element of $\mathcal{A}$, denoted $A_{nj} = \{P_{nk}^\circ,x_{nj}^\circ \} \in \mathcal{A}$, should satisfy
\begin{equation}\label{set A satisfication}
  \sum_{j \in \mathcal{J}}x_{nj}^\circ = j.
\end{equation}
Let $\mu(A_{nj})$ denote whether $A_{nj}$ is chosen to be the solution, which can be expressed as
\begin{numcases}{\mu(A_{nj})=}
1, A_{nj} \text{ is chosen to be the solution} \nonumber  \\
0, \text{ otherwise} \nonumber
\label{mu Anj definition}
\end{numcases}
Define $\omega(A_{nj})$ to be the backhaul bandwidth occupied by ONU-AP$_n$ with respect to solution $A_{nj}$
\begin{numcases}{\omega(A_{nj})=}
\sum_{k \in \Phi_n}p_n^{miss}R_{nk}\big|_{\{P_{nk},x_{nj}\}=A_{nj}}, \mu(A_{nj})=1 \nonumber \\
\ 0 \  \qquad \qquad \qquad \qquad \qquad \quad ,  \mu(A_{nj})=0
\label{omega Anj definition} \nonumber
\end{numcases}
Likewise, define $\nu(A_{nj})$ to be the sum rate of UEs associated to ONU-AP$_n$ with respect to solution $A_{nj}$
\begin{numcases}{\nu(A_{nj})=}
\sum_{k \in \Phi_n}R_{nk}\big|_{\{P_{nk},x_{nj}\}=A_{nj}}, \mu(A_{nj})=1 \nonumber \\
\ 0\  \qquad \qquad \qquad \qquad \ \ ,  \mu(A_{nj})=0
\label{nu Anj definition} \nonumber
\end{numcases}

Then problem \textbf{P1} can be converted into the problem of determining the value of $\mu(A_{nj})$ with the aim to maximize the downlink wireless access throughput of the FiWi access network as follows,
\begin{subequations}
\begin{align}
\textbf{P2: } \max_{\mu(A_{nj})} \label{objective function2}   &\quad\sum_{n\in \mathcal{N}}\sum_{j\in \mathcal{J}}\nu(A_{nj}) \\
\text{s.t.}    &\label{constraint:backhaul constraint2}\quad \sum_{n \in \mathcal{N}}\sum_{j \in \mathcal{J}}\omega(A_{nj}) \leq C \\
        &\label{constraint:multiple choice constraint}\quad \sum_{j \in \mathcal{J}} \mu(A_{nj}) \leq 1 ,\quad \forall n\in \mathcal{N},\forall j \in \mathcal{J} \\
        & \ \quad \mu(A_{nj}) \in \{0,1\}, \quad \forall n\in \mathcal{N},\forall j \in \mathcal{J}
\end{align}
\end{subequations}

Problem \textbf{P2} is in the form of a multiple-choice knapsack problem (MCKP)\cite[Chapter 11]{MCKP}. The problem is to choose no more than one item from each class such that the profit sum is maximized without exceeding the capacity $C$ in the corresponding backhaul limitation. Considering an optimal solution to the MCKP problem, it is obvious that by removing any class $n$ from the optimal MCKP packing, the remaining solution set must be an optimal solution to the subproblem defined by capacity $C-\omega(A_{nj})$ and class set $\mathcal{N}\backslash \{n\}$. Any other choice will risk to diminish the optimal solution value. Hence, problem \textbf{P2} has the property of an optimal substructure as described in \cite[Section 15.3]{introduction to algorithms}. We design an optimal and efficient algorithm through dynamic programming based on VABWF method as outlined in Algorithm 1.
Define $R(n,c)$ to be the maximum downlink throughput of the FiWi access network on the condition that there exist only the first $n$ classes with backhaul limitation $c$. Then we can consider an additional class to calculate the corresponding maximum throughput and the following recursive formula describe how the iteratively method is performed
\begin{align}\label{recursive formula}
  R(n,c) = & \max \bigg\{ \big\{R(n-1,c-\omega(A_{nj}))+ \nu(A_{nj}) \big\} \cup \nonumber \\
   &  \big\{R(n-1,c)\big\}     \Big| c-\omega(A_{nj}) \geq 0, \forall j \in \mathcal{J}  \bigg\}.
\end{align}
Note that the constraint (\ref{constraint:multiple choice constraint}) is satisfied by placing the recursive formula in the innermost loop. At each iteration, we choose the optimum solution to the given number of classes $n$ and bandwidth limitation $c$. The running time of Algorithm 1 is dominated by the $c$ iterations of the second for-loop, each of which contains at most $J$ iterations where a new solution of a subproblem is computed. Considering $N$ ONU-APs in the FiWi access network, there are $N$ subproblems to be computed, so the overall time complexity is $O(NCJ)$.
\renewcommand{\algorithmicrequire}{\textbf{Input:}}
\renewcommand{\algorithmicensure}{\textbf{Output:}}

\begin{algorithm}
\caption{\mbox{Dynamic Programming Algorithm for \textbf{P2}}}
\label{alg:Optimal Algorithm}
\begin{algorithmic}[1]
\REQUIRE{$~\mathcal{N}$,$~\mathcal{K}$,$~\mathcal{J}$,$~B$,$~P_M$,$~C$,$~Q_n$,$~g_{nk}$,$~\omega$,$~s$,$~\rho$,$~\sigma$,$~\delta$;}

\ENSURE{$\{P_{nk}^\ast,x_{nj}^\ast\},$ and the downlink  throughput $R^{dl}$;}
\FORALL {$~A_{nj} \in \mathcal{A},\forall n\in \mathcal{N},\forall j\in \mathcal{J}$}
\STATE {Calculate $\mu_n$ according to (\ref{mu n without lamda});}
\STATE {$P_{nk} \Leftarrow \frac{B}{\rho\mu_n\ln2}-\frac{\sigma^2}{g_{nk}}$;}
\WHILE {$\exists P_{nk}<0,\forall k \in \Phi_n$}
\STATE {$P_{nk}\Leftarrow (P_{nk})^+$;}
\STATE {Recalculate $\mu_n$ and reallocate $P_{nk}$ to those $P_{nk}>0$;}
\ENDWHILE
\STATE {Calculate $\omega(A_{nj}), \nu(A_{nj})$;}
\ENDFOR
\STATE {Calculate $R(1,c), c=0,1,2,\cdots,C$;}
\FOR {$n=2;n\leq N;n++$}
\FORALL {$c\in\big \{0,1,2,\cdots,C \big\}$}
\IF {$c-min\big \{\omega(A_{nj})|\forall j \in \mathcal{J}\big\}<0$}
\STATE {$R(n,c)\Leftarrow R(n-1,c)$;}
\ELSE
\STATE {Update $R(n,c)$ according to (\ref{recursive formula});}
\ENDIF
\IF {$R(n,c)\neq R(n-1,c)$}
\STATE {$tr(n,c) \Leftarrow argmax_j\big\{\{R(n-1,c)\}\cup\newline\{R(n-1,c-\omega(A_{nj}))+\nu(A_{nj})\}\big\}$;}
\ENDIF
\ENDFOR
\ENDFOR
\STATE { Find the optimal solution $A_{nj}$ according to \\
 $tr(n,c)$ in the reverse order of n;}
\RETURN {$\{P_{nk}^\ast,x_{nj}^\ast\},$ $R^{dl}$;}
\end{algorithmic}
\end{algorithm}

\section{Simulation Results}
\begin{table}
  \centering
  \caption{SIMULATION SETTING}
  \label{simulationsetting}
  \begin{tabular}{|p{3.6cm}|p{4.2cm}|}
    \hline
    \textbf{Parameter}                              & \textbf{Value}\\\hline
    Number of ONU-APs                            &$32$ \\\hline
    UE numbers at each ONU-AP  &$20$\\\hline
    System bandwidth                   &$20$MHz\\\hline
    Subchannel bandwidth               &$500$kHz\\\hline
    Thermal noise density&$-174$dBm/Hz\\\hline
    Number of files& $1000$\\\hline
    Size of file& $100$MB\\\hline
    Cache size of each ONU-AP  &$30$GB\\\hline
    Caching power efficiency   &$6.25\times 10^{-12}$W/bit  \\\hline
    Circuit power at each ONU-AP   &$3$W   \\\hline
    Transmission power coefficient   & $1.2$\\\hline
    Fiber backhaul capacity  &$2.488$Gbps (unless stated otherwise)\\\hline
    Maximum total power for each ONU-AP & $7$W (unless stated otherwise)\\\hline
    Zipf parameter  & $0.8$ (unless stated otherwise)\\\hline

  \end{tabular}
\end{table}
In this section, we validate the performance of our proposed caching and power allocation algorithm under different FiWi access network scenarios. The simulation parameters are summarized in Table \uppercase\expandafter{\romannumeral1}. For a comparison purpose, we introduce full-cache, equal-power and random algorithms as follows,

\begin{figure*}[ht]
  \centering
  \subfigure[Throughput vs. fiber backhaul capacity ]{
    \label{fig:colt} 
    \includegraphics[width=2.3in]{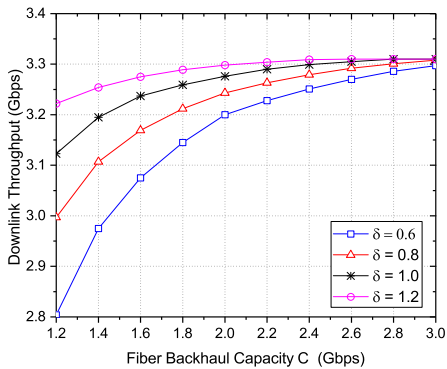}}
  \hspace{0.7in}
  \subfigure[Cache utilization vs. fiber backhaul capacity ]{
    \label{fig:utilization} 
    \includegraphics[width=2.3in]{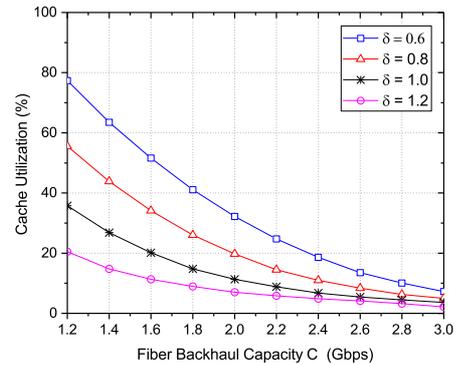}
    }
  \caption{Throughput and cache utilization of the proposed algorithm }
  \label{fig:performance of the proposed method} 
\end{figure*}

\begin{itemize}
  \item \emph{Full-Cache}: Each ONU-AP fully uses the cache capacity to store the most popular files, and the classical water-filling method is used to allocate transmission power to UEs.
  \item \emph{Equal-Power}: Transmission power is equally allocated to UEs. Meanwhile, each ONU-AP choose the most popular files to cache, but does not necessarily use full cache capacity.
  \item \emph{Random}: ONU-APs randomly choose files to cache and transmission power is also equally allocated to UEs.
\end{itemize}

\subsection{Downlink Throughput and Cache Utilization of the Proposed Algorithm}
As shown in Fig.\ref{fig:colt}, observing typical GPON downlink rate of 2.488Gbps, the corresponding downlink throughput is increased by 30\% with regard to the backhaul bandwidth. Even greater improvement of the downlink throughput is observed when the backhaul bandwidth is 1.25Gbps, which is a typical downlink rate in EPON. The throughput is increased by about 2.5 times with regard to the backhaul bandwidth. As for the cache utilization shown in Fig.\ref{fig:utilization}, we never use full cache capacity and the cache utilization is below 50\% in most instances. Besides, the slope of the curves changes with different values of $\delta$.

Fig.\ref{fig:colt} and Fig.\ref{fig:utilization} show that the downlink throughput increases with the increase of fiber backhaul capacity, whereas the cache utilization reduces as the backhaul capacity increases. This is not surprising, because the larger the fiber backhaul capacity is, the more cache miss files can be fetched via the feeder fiber from the Internet. Thus, less power for caching is used for local storage. On the other hand, the cache utilization is higher when the Zipf distribution parameter $\delta$ has a smaller value, which means the proposed algorithm choose to cache more files rather than increase transmission power for the purpose of maximizing the throughput. This is due to the fact that the file popularity becomes more decentralized with smaller value of $\delta$, so the proposed algorithm tends to cache more files to avoid the backhaul bottleneck.

\subsection{Comparisons with the other Algorithms}
\begin{figure*}[ht]
  \centering
  \subfigure[Throughput vs. maximum total power]{
    \label{fig:Throughput vs. maximum total power} 
    \includegraphics[width=2.2in]{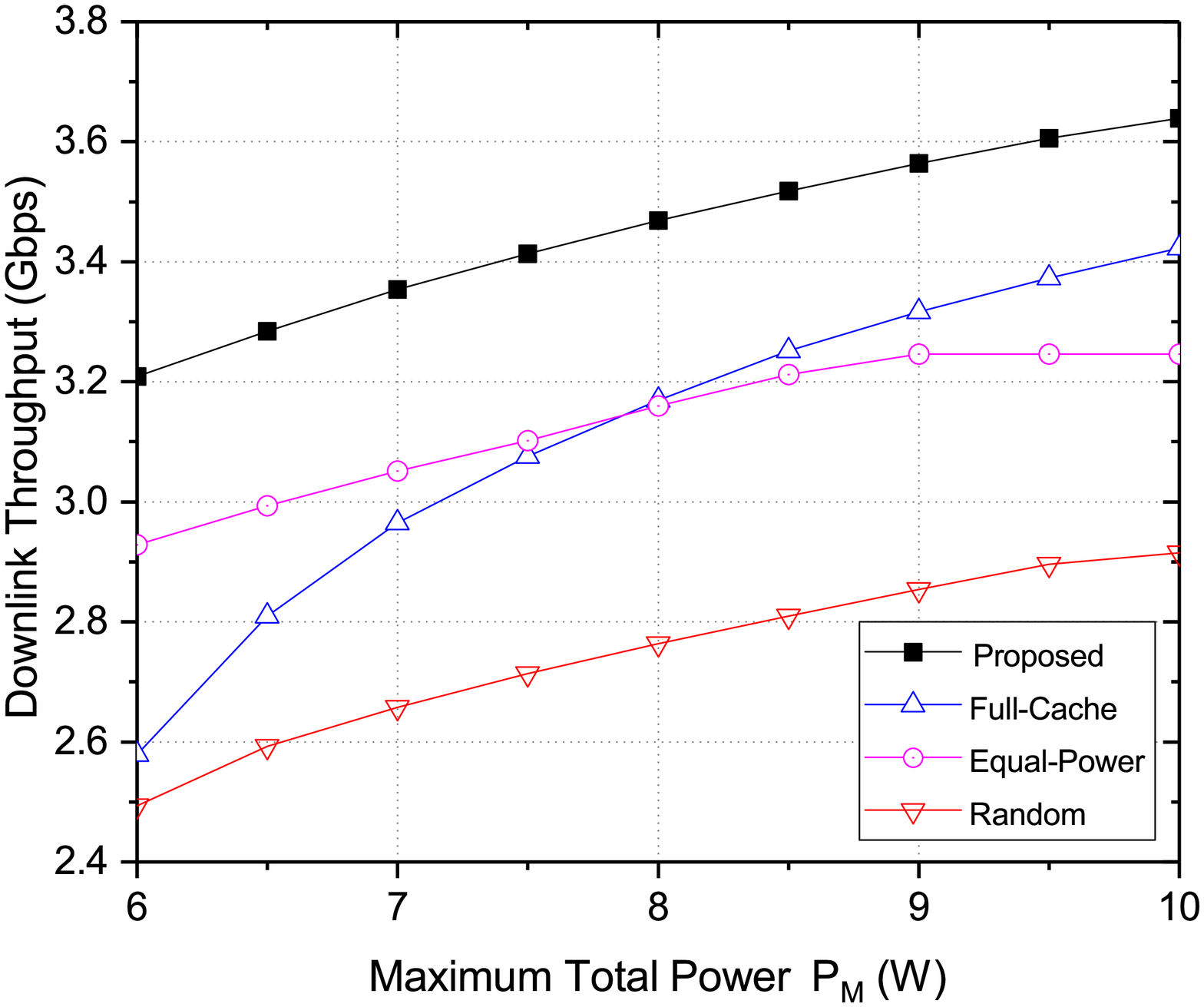}}
  \subfigure[Throughput vs. fiber backhaul capacity]{
    \label{fig:Throughput vs. fiber backhaul capacity} 
    \includegraphics[width=2.2in]{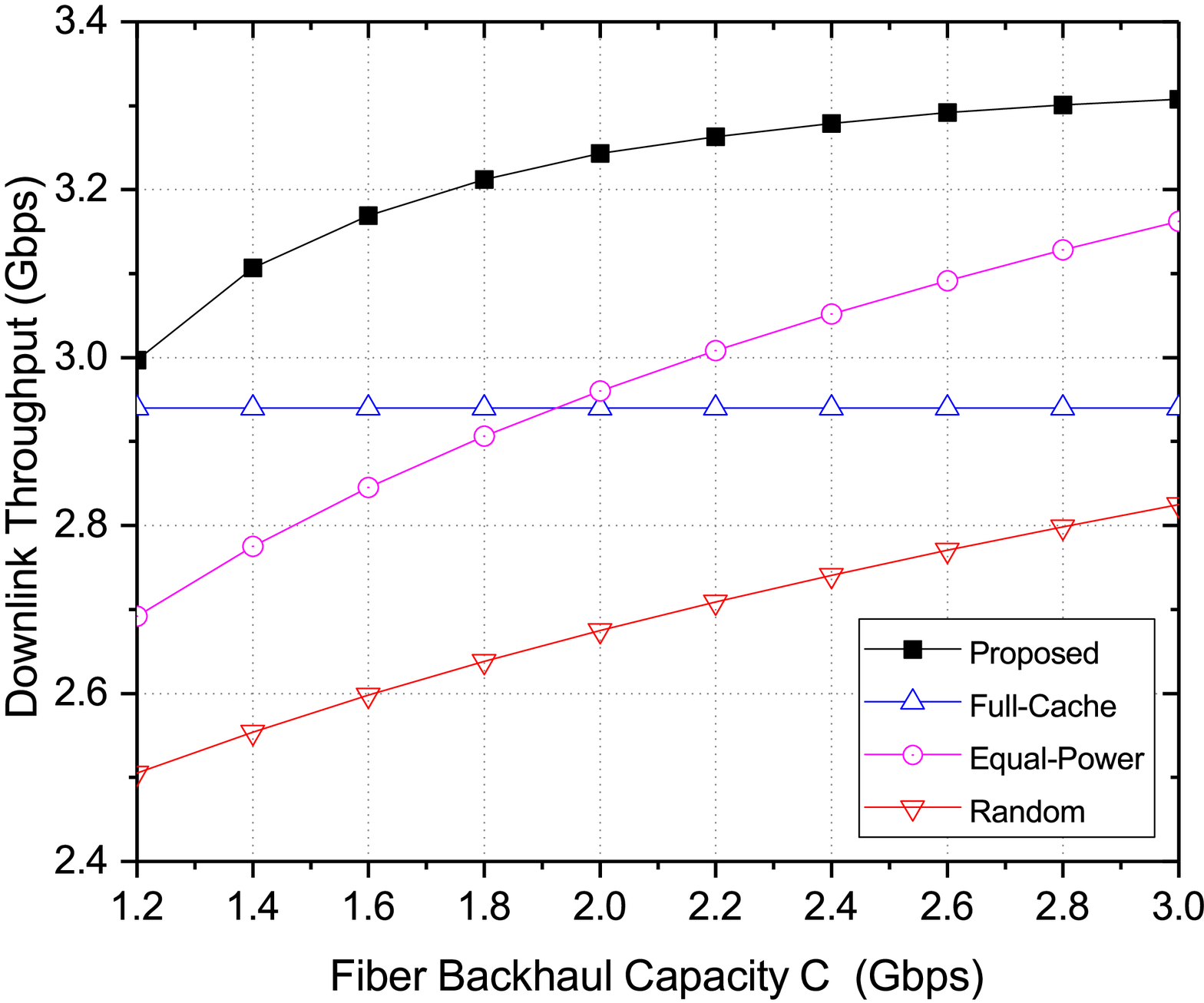}
    }
  \subfigure[Throughput vs. diversity of file popularity]{
    \label{fig:Throughput vs. diversity of file popularity} 
    \includegraphics[width=2.2in]{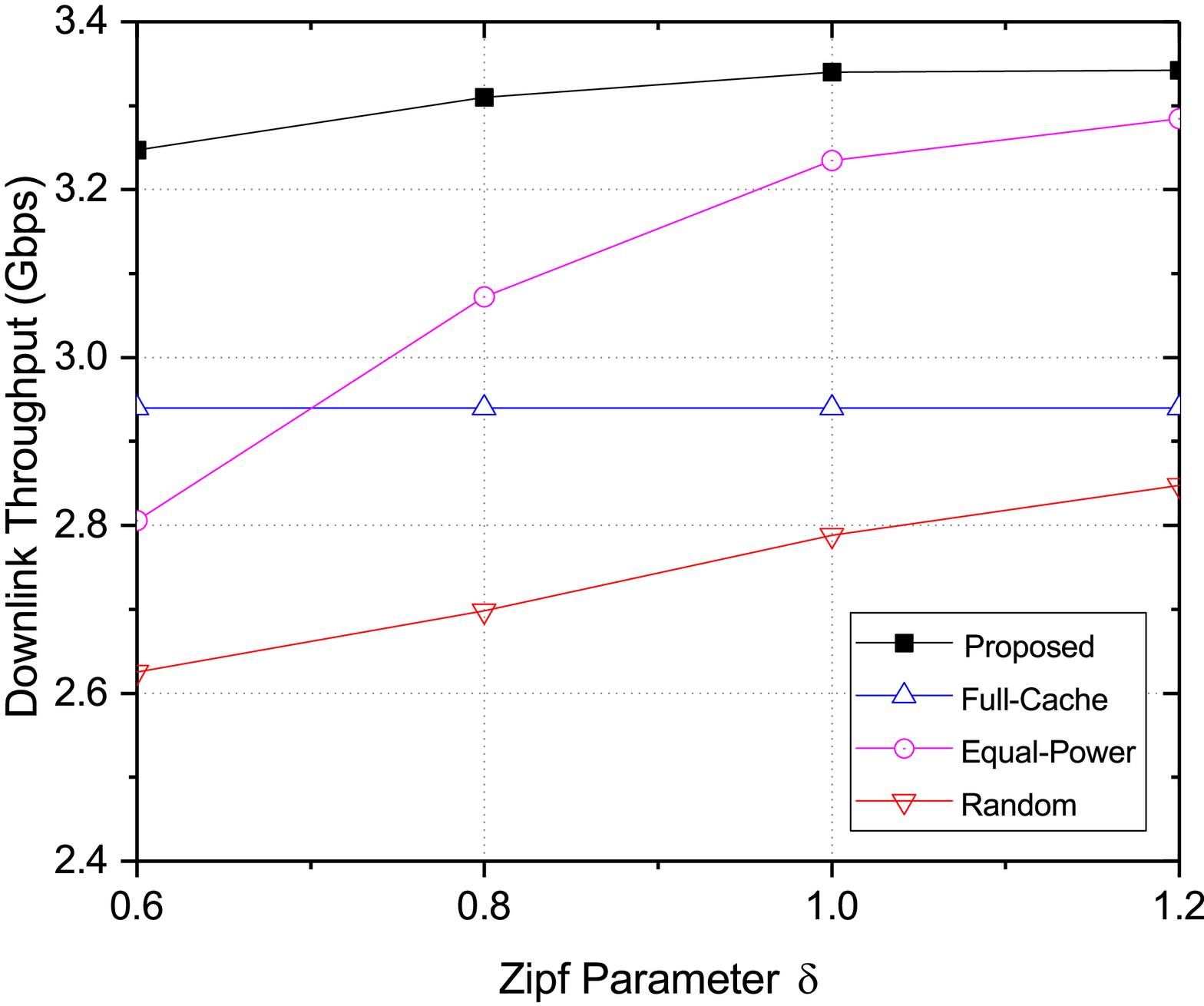}
    }
  \caption{Comparison between the proposed algorithm and the other algorithms}
  \label{fig:Numerical result_aggregated_window} 
\end{figure*}
Fig. \ref{fig:Throughput vs. maximum total power} illustrates the downlink throughput with respect to maximum total power $P_M$. It is observed that the proposed algorithm outperforms the others with an increase of 10.4\%, 11.8\% and 25.8\% in the downlink throughput, respectively. This is because the proposed algorithm can adaptively adjust the transmission power and caching power under different maximum total power values. Under the condition that the backhaul occupied by cache miss files does not exceed the backhaul limitation $C$, our proposed algorithm tries to increase the transmission power level and allocate more power to UEs with better channel conditions for higher throughput. We can observe that the interval between the proposed algorithm and full-cache algorithm becomes smaller with the increase of maximum total power available for ONU-APs. This is due to the fact that by adopting the proposed algorithm, more caching power is used when the value of maximum total power increases with the purpose of maximizing downlink wireless access throughput. Therefore, the solution to the problem using full-cache algorithm gets closer to the optimal solution using the proposed algorithm. In addition, equal-power algorithm fails to get higher throughput because the transmission power allocated to UEs is a fixed value, and the bottleneck appears when higher value of maximum total power is chosen. The worst performance occurs when random algorithm is adopted because the highest-popularity-first caching strategy is not used. Thus, the cache hit ratio is rather low, which limits the system throughput.

Fig. \ref{fig:Throughput vs. fiber backhaul capacity} illustrates the downlink throughput with respect to fiber backhaul capacity $C$. It is again observed that the proposed algorithm has much better performance than the others. It is worth noting that the downlink throughput using full-cache algorithm is considerable when the backhaul capacity $C$ is small, but never increases even if the backhaul capacity increases to a higher level. This is not surprising because the total transmission power is fixed when we use full-cache algorithm. Therefore, the throughput is never enhanced regardless of the increased backhaul capacity. On the contrary, our proposed algorithm makes full use of the backhaul capacity while increasing the transmission power level for higher throughput.

Fig.  \ref{fig:Throughput vs. diversity of file popularity} illustrates the downlink throughput with respect to diversity of file popularity. The proposed algorithm performs better than the others under different values of Zipf parameter $\delta$. We observe that the throughput increases with the increase of $\delta$. This is because more requests centralize on a few files with the rest of files rarely getting requested when $\delta$ is larger, thus our proposed algorithm tends to cache less files and save more caching power for increasing transmission power to achieve higher throughput. The interval between our proposed algorithm and equal-power algorithm gets smaller with the increase of parameter $\delta$, but is not expected to be eliminated. This is reasonable because the cache hit ratio gets higher when parameter $\delta$ increases, leading to better performance of equal-power algorithm. Nevertheless, with transmission power equally allocated to UEs, the optimal solution will never be obtained. We also notice that the throughput remains unchanged using full-cache algorithm despite the change of parameter $\delta$. This indicates a waste of caching power because files rarely being requested are cached in ONU-APs. Therefore, the transmission power drops to a low level and the downlink throughput is not expected to increase.

\section{Conclusion}
In this paper, we equip ONU-APs with caches in FiWi access networks to deal with fiber backhaul bottleneck and further enhance the system throughput. To achieve the optimal downlink wireless access throughput, we propose a dynamic programming algorithm based on VABWF to perform joint power allocation and caching optimization. Simulation results show that the proposed algorithm outperforms full-cache and random algorithms as well as equal-power allocation algorithm significantly in terms of system throughput.

\section*{ACKNOWLEDGMENT}
This work was supported in part by the National Science Foundation of China (No.61390513, 91538203, 61771445) and the Fundamental Research Funds for the Central Universities.


\begin{thebibliography}{99}
\bibitem{ref1}
Cisco, ``Cisco visual networking index: Forecast and methodology, 2016-2021,'' {\em White Paper}, Sep. 2017.

\bibitem{ref2}
J. Liu, H. Guo, H. Nishiyama, H. Ujikawa, K. Suzuki, and N. Kato, ``New perspectives on future smart FiWi networks: Scalability, reliability, and energy efficiency,'' {\em IEEE Commun. Surv. Tutorials}, vol. 18, no. 2, pp. 1045-1072, 2016.

\bibitem{high capacity and flexibility}
D. P. Van, B. P. Rimal, M. Maier, and L. Valcarenghi, ``ECO-FiWi: An energy conservation scheme for integrated fiber-wireless access networks,'' {\em IEEE Trans. Wirel. Commun.}, vol. 15, no. 6, pp. 3979-3994, 2016.

\bibitem{AP dense deployment}
Y. Yiakoumis, M. Bansal, A. Covington, J. V. Reijendam, S. Katti, and N. Mckeown, ``BeHop: A testbed for dense WiFi networks'', {\em ACM SIGMOBILE Mobile Comput. Commun. Rev.}, vol. 18, no. 3, pp. 71-80, 2014.


\bibitem{ref7}
F. J. Effenberger, H. Mukai, J. Kani, and M. Rasztovits-wiech, ``Next-generation PON --- Part \uppercase\expandafter{\romannumeral3}: System specifications for XG-PON,'' {\em IEEE Commun. Mag.}, no. November, pp. 58-64, 2009.

\bibitem{ref4}
I.-S. Hwang, A. Nikoukar, C.-H. Teng, and K. R. Lai, ``Scalable architecture for VOD service enhancement based on a cache scheme in an ethernet passive optical network,'' {\em IEEE/OSA J. Opt. Commun. Netw.}, vol. 5, no. 4, pp. 271-282, 2013.




\bibitem{ref12}
X. Li, K. Kanonakis, N. Cvijetic, A. Tanaka, C. Qiao, and T. Wang, ``Joint bandwidth provisioning and cache management for video distribution in software-defined passive optical networks,'' in {\em Conf. Opt. Fiber Commun. Tech. Dig. Ser.}, vol. 1, pp. 4-6, 2014.

\bibitem{ref13}
S. Abeywickrama and E. Wong, ``Single-receiver dual-channel dynamic bandwidth algorithm for local storage VoD delivery,'' in {\em IEEE Glob. Telecommun. Conf. (GLOBECOM)}, pp. 2674-2679, 2013.

\bibitem{how much energy}
G. Auer, V. Giannini, C. Desset, I. Godor, P. Skillermark, M. Olsson, M. Imran, D. Sabella, M. Gonzalez, O. Blume, and A. Fehske, ``How much energy is needed to run a wireless network?'' {\em IEEE Trans. Wireless Commun.}, vol. 18, no. 5, pp. 40-49, Oct. 2011.

\bibitem{power proportional model1}
N. Choi, K. Guan, D. C. Kilper, and G. Atkinson, ``In-network caching effect on optimal energy consumption in content-centric networking,'' {\em IEEE Int. Conf. Commun. (ICC)}, pp. 2889-2894, 2012.

\bibitem{power proportional model2}
J. Llorca, A. M. Tulino, K. Guan, J. Esteban, M. Varvello, N. Choi, and D. C. Kilper, ``Dynamic in-network caching for energy efficient content delivery,'' in {\em Proc. IEEE INFOCOM}, pp. 245-249, 2013.


\bibitem{waterfilling}
B. G. Lee, D. Park, and H. Seo, {\em Wireless communications resource management}. John Wiley \& Sons Pte Ltd., 2009.

\bibitem{MCKP}
H. Kellerer, U. Pferschy, and D. Pisinger, {\em Introduction to NP- completeness of knapsack problems}. Springer, 2004.

\bibitem{introduction to algorithms}
T. H. Cormen, C. E. Leiserson, R. L. Rivest, and C. Stein, {\em Introduction to algorithms}. The MIT Press, third edition, 2009.


%
%
%
%


\end{thebibliography}
\end{document}